\theoremstyle{plain}
\newtheorem{theorem}{Theorem}
\newtheorem{lemma}[theorem]{Lemma}
\newtheorem{corollary}[theorem]{Corollary}
\newtheorem{conjecture}[theorem]{Conjecture}
\newtheorem{remark}[theorem]{Remark}
\DeclareMathOperator{\ric}{Ric}
\DeclareMathOperator{\tr}{tr}
\DeclareMathOperator{\hess}{Hess}
\DeclareMathOperator{\dist}{dist}
\begin{document}
\title[New Restrictions on the Topology of Extreme Black Holes]{New Restrictions on the Topology of Extreme Black Holes}

\author[Khuri]{Marcus Khuri}
\address{Department of Mathematics\\
Stony Brook University\\
Stony Brook, NY 11794, USA}
\email{khuri@math.sunysb.edu}

\author[Woolgar]{Eric Woolgar}
\address{Department of Mathematical and Statistical Sciences\\
University of Alberta\\
Edmonton, AB, Canada T6G 2G1}
\email{ewoolgar@ualberta.ca}

\author[Wylie]{William Wylie}
\address{215 Carnegie Building\\
Department of Mathematics\\
Syracuse University\\
Syracuse, NY 13244, USA}
\email{wwylie@syr.edu}

\thanks{M. Khuri acknowledges the support of NSF Grant DMS-1708798. E. Woolgar was supported by a Discovery Grant RGPIN 203614 from the Natural Sciences and Engineering Research Council. W. Wylie acknowledges the support of Simons Foundation Grant \#355608 and NSF Grant DMS-1654034.}

\begin{abstract}
We provide bounds on the first Betti number and structure results for the fundamental group of horizon cross-sections for extreme stationary vacuum black holes in arbitrary dimension, without additional symmetry hypotheses. This is achieved by exploiting a correspondence between the associated near-horizon geometries and the mathematical notion of $m$-quasi Einstein metrics, in addition to generalizations of the classical splitting theorem from Riemannian geometry. Consequences are analyzed and refined classifications are given for the possible topologies of these black holes.
\end{abstract}
\maketitle

\section{Introduction}
\label{sec1}
\setcounter{equation}{0}
\setcounter{section}{1}

\noindent Stephen Hawking revolutionized the theory of black holes. One part of that legacy is his horizon topology theorem \cite{Hawking,HawkingEllis} which states that, in spacetime dimension $D=4$, cross-sections of the event horizon of asymptotically flat stationary vacuum black holes that obey the dominant energy condition must have spherical topology $S^2$. In higher dimensions such a simple characterization does not hold, as is illustrated by the $D=5$ ring solutions of Emparan-Reall \cite{EmparanReall} and Pomeransky-Sen'kov \cite{PomeranskySenkov} which have cross-sectional horizon topology $S^1\times S^2$. A natural question then arises: what are the possible horizon topologies for asymptotically flat stationary vacuum black holes in dimensions $D>4$?  A primary observation that underpins much of the study of this problem is the fact that such horizons, or more generally stable marginally outer trapped surfaces, are of positive Yamabe type \cite{Galloway,Galloway2,GallowaySchoen,Lucietti}. This means that these manifolds admit Riemannian metrics of positive scalar curvature. In $D=5$ this provides strong restrictions on the possible topologies of the 3-dimensional horizons. More precisely, by the prime decomposition theorem \cite{Hempel} along with a result of Gromov-Lawson \cite{GromovLawson} and resolution of the Poincar\'{e} conjecture, a compact orientable 3-manifold having positive Yamabe invariant is diffeomorphic to a spherical space, $S^1\times S^2$, or a finite connected sum thereof. Here a spherical space refers to a quotient of the 3-sphere $S^3/\Gamma$ where $\Gamma\subset O(4)$ is a discrete subgroup, an example is the lens space $L(p,q)$ in which $\Gamma=\mathbb{Z}_p$.

Another tool used to study the topology of black holes is the topological censorship theorem 
\cite{FriedmanSchleichWitt} which states that any curve beginning and ending in the asymptotically flat end can be deformed continuously to lie entirely within the asymptotic region. An example of a result relying on topological censorship is a refinement of the classification given above when $D=5$. In \cite{HollandsHollandIshibashi} it is assumed that in addition to stationarity a $U(1)$ symmetry is present, which is guaranteed for analytic solutions \cite{HollandsIshibashi0,HollandsIshibashiWald,MoncriefIsenberg}, and it is proven that the horizon is one of several possible quotients of $S^3$ by isometries, or a connected sum between copies $S^1\times S^2$ and lens spaces. If multiple axial symmetries are present, namely the isometry group contains $U(1)^{D-3}$, then Hollands-Yazadjiev \cite{HollandsYazadjiev1} have shown that the only possible horizon topologies are $S^3\times T^{D-5}$, $S^2\times T^{D-4}$, or $L(p,q)\times T^{D-5}$, where $T^{D-5}$ denotes the $(D-5)$-dimensional torus. Note that this amount of axisymmetry is only compatible with asymptotic flatness in dimensions $D=4,5$. Furthermore, it should be pointed out that it is not known whether all of these possible topologies are realized by stationary vacuum solutions, even in dimension five. More precisely, while $S^3$ and $S^1\times S^2$ have been realized by the Myers-Perry \cite{MyersPerry} and ring solutions listed above respectively, the only explicit examples of vacuum lenses \cite{ChenTeo} are known to have conical singularities or possess naked singularities. On the other hand,
vacuum lenses and other configurations have been produced in multitude through abstract existence results for singular harmonic maps \cite{KhuriWeinsteinYamada}, although it is not known at this time whether any of these solutions are void of conical singularities; in addition, there has been some progress in the asymptotically Kaluza-Klein and locally Euclidean cases \cite{KhuriWeinsteinYamada1} as well. In other theories, such as $D=5$ minimal supergravity, geometrically regular asymptotically flat lens black holes have been produced \cite{BreunholderLucietti,KunduriLucietti1,TomizawaNozawa}. The latter examples are supersymmetric and hence extremal.

Cobordism theory has also aided in classifying the topologies of black holes. Recall that two compact manifolds of the same dimension are called cobordant if their disjoint union is the boundary of a compact manifold of one higher dimension. The idea is that, in the asymptotically flat case, horizon cross-sections will be cobordant through a simply connected spacelike hypersurface to a sphere $S^{D-2}$ sitting in the asymptotic end, and this should yield conditions on the possible topologies of the horizon. The lowest dimension for which this line of investigation can yield topological restrictions is $D=6$. In \cite{HelfgottOzYanay} Helfgott-Oz-Yanay combined the methods of cobordism theory with Freedman's classification of 4-manifolds \cite{FreedmanQuinn} and results of Donaldson \cite{DonaldsonKronheimer} to show that if the horizon cross-section is simply connected
then it must be homeomorphic to $S^4$, to a finite connected sum of $S^2\times S^2$'s or $\mathbb{CP}^2 \# \overline{\mathbb{CP}}^2$'s---the 1-point blow-up of $\mathbb{CP}^2$ where $\overline{\mathbb{CP}}^2$ denotes the complex projective plane with opposite orientation. If in addition the horizon is a spin manifold, then the connected sum of complex projective planes may be removed from this list. In \cite{KunduriLucietti0} explicit examples of vacuum near-horizon geometries having horizon cross-sections realizing the topologies $S^2\times S^2$ and $\mathbb{CP}^2 \# \overline{\mathbb{CP}}^2$
have been constructed. Surveys concerning the various mathematical techniques used to study black hole topology may be found in \cite{Galloway1,HollandsIshibashi}.

Although much progress has been made in understanding black hole topology, the methods utilized so far have substantial limitations. Indeed, while the primary observation that horizon cross-sections are of positive Yamabe type provides strong constraints in dimensions $D=4,5$, this condition is considerably more flexible in higher dimensions.
For instance any manifold of the form $S^n\times M$, where $n\geq 2$ and $M$ is compact, admits a metric of positive scalar curvature by scaling the round metric of $S^n$ properly. The purpose of this article is to introduce a new technique into the study of horizon topology in the context of extreme black holes, and to analyze its consequences.
The approach here is based on a relation between the associated near-horizon geometries of these black holes and the mathematical notion of $m$-quasi Einstein metrics (see, e.g., \cite{Villani}) first exploited in \cite{KhuriWoolgar,KhuriWoolgar1}, as well as results emanating from the classical splitting theorem of Riemannian geometry \cite{Petersen}.

\begin{theorem}\label{thm1}
Let $\mathcal{H}$ be a degenerate horizon (cross-section) component of a stationary vacuum spacetime with nonnegative cosmological constant $\Lambda\geq 0$.

\begin{itemize}
\item [(i)] The fundamental group $\pi_{1}(\mathcal{H})$ contains an Abelian subgroup of finite index which is isomorphic to $\mathbb{Z}^k$ with $k\leq D-4$.\smallskip

\item [(ii)] The first Betti number satisfies $b_{1}(\mathcal{H})\leq D-4$. \smallskip

\item [(iii)] If $\Lambda>0$ then $\pi_{1}(\mathcal{H})$ is finite.
\end{itemize}
\end{theorem}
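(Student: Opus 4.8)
The plan is to recast the near-horizon equations as a lower Ricci bound in the Bakry--\'Emery sense and then run, on the universal cover, the classical structure theory for manifolds of nonnegative Ricci curvature. First I would invoke the correspondence between near-horizon geometries and $m$-quasi-Einstein metrics \cite{KhuriWoolgar,KhuriWoolgar1} recalled above: a degenerate horizon cross-section $\mathcal{H}^{D-2}$ in a stationary vacuum spacetime with cosmological constant $\Lambda$ carries a metric $h$ and a vector field $X$ --- bounded, since $\mathcal{H}$ is compact --- for which the $2$-Bakry--\'Emery Ricci tensor satisfies
\[
\ric^{2}_{X}\;:=\;\ric_{h}+\tfrac12\,\mathcal{L}_{X}h-\tfrac12\,X^{\flat}\otimes X^{\flat}\;=\;\lambda\,h ,
\]
where $\lambda$ is a fixed positive multiple of $\Lambda$; in particular $\lambda\ge 0$, and $\lambda>0$ precisely when $\Lambda>0$. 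Lifting $h$ and $X$ to the universal cover $\widetilde{\mathcal{H}}$, we have $\ric^{2}_{X}\ge 0$ there, while $\pi_{1}(\mathcal{H})$ acts freely, properly discontinuously, cocompactly, and by isometries preserving $X$. Part (iii) follows immediately: if $\Lambda>0$ then $\ric^{2}_{X}=\lambda h$ with $\lambda>0$ constant, and the Bonnet--Myers theorem for the $m$-Bakry--\'Emery tensor at finite $m$ --- whose proof rests only on the Riccati comparison along geodesics, into which $X$ enters solely through its radial component, and is therefore insensitive to $X$ being non-gradient --- bounds $\diam\widetilde{\mathcal{H}}$; hence $\widetilde{\mathcal{H}}$ is compact and $\pi_{1}(\mathcal{H})$ is finite.

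For (i) and (ii) I may assume $\widetilde{\mathcal{H}}$ is noncompact, since otherwise $\pi_{1}(\mathcal{H})$ is finite, $b_{1}(\mathcal{H})=0$, and one takes $k=0$; then by (iii) necessarily $\Lambda=0$, so $\lambda=0$. A noncompact complete manifold admitting a cocompact isometry group contains a line, so the Cheeger--Gromoll splitting theorem for the $2$-Bakry--\'Emery tensor with $\ric^{2}_{X}\ge 0$ --- again using only the radial behavior of $X$ --- yields an isometric splitting $\widetilde{\mathcal{H}}\cong\mathbb{R}\times N_{1}$; evaluating the defining equation on the unit tangent to the $\mathbb{R}$-factor gives a first-order ODE for the corresponding component of $X$ which, by $\lambda=0$, completeness of the line, and boundedness of $X$, admits only the zero solution, so $X$ is tangent to $N_{1}$ and $(N_{1},h_{N_{1}},X)$ again satisfies $\ric^{2}_{X}=0$. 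Iterating produces an isometric splitting $\widetilde{\mathcal{H}}\cong\mathbb{R}^{k}\times N$ with $N$ containing no line and $X$ tangent to $N$. Since $\widetilde{\mathcal{H}}$ is simply connected so is $N$, whence $\dim N\ne 1$; moreover $\mathbb{R}^{k}$ is now the full Euclidean de Rham factor, so $\operatorname{Isom}(\widetilde{\mathcal{H}})=\operatorname{Isom}(\mathbb{R}^{k})\times\operatorname{Isom}(N)$, and projecting the cocompact action to the second factor shows $N$ too carries a cocompact isometry group --- being lineless, $N$ is then compact. Finally $\dim N=0$ cannot occur, for it would force $\widetilde{\mathcal{H}}=\mathbb{R}^{D-2}$ flat and $\mathcal{H}$ a compact flat manifold, contradicting the positive Yamabe type of horizon cross-sections \cite{GallowaySchoen,Lucietti} (vacuum with $\Lambda\ge 0$ satisfies the dominant energy condition) since a compact flat manifold admits no metric of positive scalar curvature. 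Hence $\dim N\ge 2$ and $k\le D-4$.

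It remains to read off the group-theoretic statements from $\widetilde{\mathcal{H}}\cong\mathbb{R}^{k}\times N$ with $N$ compact and simply connected. The classical Bieberbach/Cheeger--Gromoll analysis of a cocompact isometric action on such a product (cf.\ \cite{Petersen}) produces a finite-index subgroup $\Gamma'\subseteq\pi_{1}(\mathcal{H})$ isomorphic to $\mathbb{Z}^{k}$, which together with $k\le D-4$ gives (i); applying the transfer homomorphism to the inclusion $\Gamma'\hookrightarrow\pi_{1}(\mathcal{H})$ then embeds $H_{1}(\mathcal{H};\mathbb{Q})$ into $H_{1}(\Gamma';\mathbb{Q})\cong\mathbb{Q}^{k}$, so $b_{1}(\mathcal{H})\le k\le D-4$, which is (ii).

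The step I expect to be the main obstacle is the splitting argument together with its bookkeeping: one must have the Cheeger--Gromoll splitting theorem in the finite-$m$ Bakry--\'Emery form for a possibly non-gradient potential field, must confirm that this field remains tangent to the evolving non-Euclidean factor throughout the iteration (where $\lambda=0$ is essential), and must transfer the cocompact symmetry through the de Rham decomposition in order to conclude that $N$ is compact. Once this is in place, the sharpness of the estimate --- $k\le D-4$ rather than the naive $k\le D-2$ --- rests solely on excluding the totally flat case, which is precisely where the positive Yamabe type of black-hole horizons enters.
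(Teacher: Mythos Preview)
Your proposal is correct and follows essentially the same route as the paper: identify the near-horizon data with a $2$-quasi-Einstein structure, apply the (non-gradient) Bakry--\'Emery splitting theorem on the universal cover to obtain $\widetilde{\mathcal{H}}\cong\mathbb{R}^{k}\times N$ with $N$ compact, invoke the Cheeger--Gromoll/Bieberbach structure theory to extract a finite-index $\mathbb{Z}^{k}$, and sharpen $k\le D-4$ by ruling out $k=D-2$ via positive Yamabe type and $k=D-3$ via simple connectedness of $N$. The only notable deviations are cosmetic: for (iii) you appeal directly to the Bakry--\'Emery Bonnet--Myers theorem, whereas the paper argues that strict positivity of $\ric_{X}^{2}$ forbids any nontrivial Euclidean factor in the splitting (which is the same mechanism in disguise), and you deduce tangency of $X$ to the non-Euclidean factor from the Riccati equation $\partial_{t}X_{t}=\tfrac12 X_{t}^{2}$ together with boundedness, whereas the paper obtains $X(b^{+})=0$ from the Bochner-type identity for the drift Laplacian.
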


Item $(\mathrm{iii})$ has previously been established in \cite{KhuriWoolgar} by different methods.  As an illustration of the additional topological restrictions Theorem \ref{thm1} places on $\mathcal{H}$, in Section \ref{sec3} we use topological arguments stemming from $(\mathrm{i})$ to refine classification results when $D=5$ to show that $\mathcal{H}$ must be diffeomorphic to a spherical space, $S^1 \times S^2$, or $\mathbb{RP}^3 \# \mathbb{RP}^3$, see Corollary \ref{cor7}. The connected sum of projective spaces can be removed from this list if an additional $U(1)$ symmetry is present.
Furthermore it is also shown that in all dimensions, horizon topologies arising from a `nontrivial' connected sum can be ruled out.

It should be pointed out that the topological restrictions of this theorem also hold for near-horizon geometries which may be studied separately from extreme black holes, and are thus of independent interest. In addition, Theorem \ref{thm1} is valid when matter is present as long as an appropriate energy condition is satisfied. The appropriate energy condition is described in \cite[Inequality (8)]{KhuriWoolgar}, and in the case of perfect fluids it reduces to the dominant energy condition.
Also, the inequality holds for pure electric fields normal to a static degenerate horizon.

\section{Main results}
\label{sec2}
\setcounter{equation}{0}
\setcounter{section}{2}

\subsection{Preliminaries} Consider a stationary black hole spacetime of dimension $D$ satisfying the vacuum Einstein equations
\begin{equation}
R_{\mu\nu}=\Lambda g_{\mu\nu}.
\end{equation}
Our normalization is such that $\Lambda$ is $\frac{2}{(n-2)}$ times the usual cosmological constant.
Stationarity gives an asymptotically timelike Killing field, and generically the rigidity theorem \cite{HollandsIshibashi0,HollandsIshibashiWald,MoncriefIsenberg} yields one or more additional rotational symmetries which altogether produces a Killing field $V$ that is normal to the event horizon. The event horizon is then a Killing horizon and on this surface
\begin{equation}
\nabla_{V}V=\kappa V,
\end{equation}
where the constant $\kappa$ denotes the surface gravity. Near each horizon component Gaussian null coordinates $(r,v,x^i)$ may be imposed such that $V=\partial_{v}$, $r=0$ represents the horizon, and $x^i$ are coordinates on the $D-2$-dimensional compact horizon cross-section $\mathcal{H}$. In the degenerate case when $\kappa=0$, the spacetime metric then has the form \cite{KunduriLucietti}
\begin{equation}
g = 2 dv \left(dr +\frac{1}{2}r^2 F(r,x) dv +rh_i(r,x) dx^i\right)
 + \gamma_{ij}(r,x) dx^i dx^j.
\end{equation}
This allows for a near-horizon limit $\phi_{\epsilon}^{*}g\rightarrow g_{NH}$ as $\epsilon\rightarrow 0$, where the diffeomorphisms $\phi_{\epsilon}$ are defined by
$v\rightarrow \frac{v}{\epsilon}$, $r\rightarrow \epsilon r$. The resulting near-horizon geometry may then be expressed as
\begin{equation}
g_{NH} = 2 dv \left(dr +\frac{1}{2}r^2 F(x) dv +rh_i(x) dx^i\right)
 + \gamma_{ij}(x) dx^i dx^j,
\end{equation}
where $\gamma_{ij}$ is the induced metric on $\mathcal{H}$. As a consequence of the Einstein equations, the near-horizon data $(F,h_i,\gamma_{ij})$ satisfy the near-horizon geometry equations on $\mathcal{H}$
\begin{equation}\label{1}
R_{ij}= \frac{1}{2}h_i h_j -\nabla_{(i}h_{j)}+\Lambda\gamma_{ij},\quad\quad\quad
F= \frac{1}{2}|h|^2-\frac{1}{2}\nabla_{i}h^i+\Lambda.
\end{equation}

Near-horizon geometries are closely related to the notion of $m$-quasi-Einstein metrics studied in the mathematical literature. These are solutions to the equation
\begin{equation}
\label{eq6}
\mathrm{Ric}_{X}^m=\lambda \gamma,
\end{equation}
where the generalized $m$-Bakry-\'Emery-Ricci tensor on $\mathcal{H}$ is given by
\begin{equation}
\label{eq7}
\mathrm{Ric}_{X}^{m}=\ric+\frac{1}{2}\pounds_{X}\gamma-\frac{1}{m}X\otimes X,
\end{equation}
in which $X$ is a 1-form/vector field and $\pounds_X$ is Lie differentiation along $X$. It should be noted that some authors reserve this terminology for the special case when the vector field is a gradient $X=\nabla f$. Clearly a vacuum near-horizon geometry defines an $m$-quasi-Einstein metric for $m=2$, $\lambda=\Lambda$, and $X=h$.

\subsection{Splitting theorem} A classical result of Riemannian geometry known as the splitting theorem \cite{CheegerGromoll} asserts that a complete Riemannian manifold with nonnegative Ricci curvature and containing a \emph{line} (an inextendible curve which minimizes the distance between any two of its points), must isometrically split off a Euclidean factor. From this several topological consequences follow. In order to take advantage of this, however, a version of the splitting theorem  under the
hypothesis of nonnegative $m$-Bakry-\'Emery-Ricci curvature is needed. Indeed, a suitable splitting theorem is known to hold in the case that $X$ is a gradient vector field \cite[Theorem 1.3]{FangLiZhang}. Here we show that the arguments of \cite[Theorem 1.3]{FangLiZhang} extend to the non-gradient case.

\begin{theorem}\label{thm2}
Let $(M,g)$
be a complete connected Riemannian manifold of dimension $n$
admitting a complete $C^1$ vector field $X$.
If $\mathrm{Ric}_{X}^{m}\geq 0$ for some
$m>0$, then $M$ splits isometrically as $\mathbb{R}^k \times N$ where $N$ is a complete Riemannian manifold without a line. Moreover the projection of $X$ onto the $\mathbb{R}^k$-factor vanishes, and $N$ has nonnegative $m$-Bakry-\'Emery-Ricci curvature.
\end{theorem}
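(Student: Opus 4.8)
The plan is to follow the structure of the gradient-case argument in [FangLiZhang, Theorem 1.3], reducing everything to the existence of a line. Suppose $M$ contains a line $\gamma:\mathbb{R}\to M$. As in the classical Cheeger–Gromoll proof, I would form the two Busemann functions $b^{+}$ and $b^{-}$ associated to the two rays of $\gamma$, and study the function $b=b^{+}+b^{-}$. The key analytic input is a comparison estimate for the Laplacian of the Busemann function under a lower bound on $\mathrm{Ric}_X^m$; this is where the vector field enters. One first establishes, along a minimizing geodesic $\sigma$ emanating in the direction of a ray, the Riccati/Bochner inequality for the $m$-Bakry–Émery operator $\Delta_X u := \Delta u - \langle X,\nabla u\rangle$ (or rather its distributional analogue), using that for a distance-type function $u$ one has $|\nabla u|=1$ and hence the $-\frac{1}{m}(X\otimes X)(\nabla u,\nabla u)$ term contributes a controlled quantity. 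The upshot, exactly as in the gradient case, is that $\Delta_X b^{\pm}\ge 0$ in the barrier/support sense near interior points, and therefore $\Delta_X b\ge 0$ weakly; since $b\ge 0$ with $b=0$ along $\gamma$, the strong maximum principle for the (uniformly elliptic, bounded-drift) operator $\Delta_X$ forces $b\equiv 0$, so $b^{+}=-b^{-}$ is smooth, with $|\nabla b^{+}|\equiv 1$.

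Once $b:=b^{+}$ is a smooth function with unit gradient, the next step is to show its level sets split off an $\mathbb{R}$-factor. Equality in the Bochner formula for $\Delta_X$ gives $\hess b = 0$ (the $\frac{1}{m}(X\otimes X)$ correction is nonnegative and must also vanish, which additionally yields $\langle X,\nabla b\rangle=0$ — this is precisely the statement that the projection of $X$ onto the Euclidean factor vanishes). From $\hess b=0$ the gradient flow of $b$ is a flow by isometries, giving a local, then (by completeness of $M$ and of $X$, and standard connectedness arguments) global isometric splitting $M\cong \mathbb{R}\times M'$ with $b$ the coordinate on $\mathbb{R}$ and $X$ tangent to $M'$. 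Because $X$ is tangent to the slices and independent of the $\mathbb{R}$-coordinate (its flow commutes with the splitting, again from $\langle X,\nabla b\rangle=0$ together with $\pounds_X g$ having no $dt$-components forced by $\mathrm{Ric}_X^m(\partial_t,\cdot)=\mathrm{Ric}(\partial_t,\cdot)=0$), one checks that $\mathrm{Ric}_X^m$ of the product decomposes as $0\oplus \mathrm{Ric}^m_{X}(g_{M'})$, so $M'$ again satisfies $\mathrm{Ric}_X^m\ge 0$. Iterating, and noting the dimension drops each time, the process terminates with $M\cong \mathbb{R}^k\times N$ where $N$ contains no line; the vanishing-of-projection and curvature statements are inherited at each stage.

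I expect the main obstacle to be the first step: justifying the Laplacian comparison and the maximum-principle argument for $\Delta_X$ when $X$ is merely a complete $C^1$ vector field rather than a gradient. In the gradient case one has a potential $f$ and can invoke the theory of smooth metric measure spaces (weighted volume comparison, Laplacian comparison for $\Delta_f$) more or less off the shelf; here one must instead argue directly with the Riccati inequality along geodesics, being careful that the drift term $-\langle X,\nabla u\rangle$ is only continuous. The key point that makes this work is that the extra term in $\mathrm{Ric}_X^m$ involving $X$ is of the "good sign" relative to the estimate needed — more precisely, along a unit-speed geodesic the relevant scalar quantity is $\ddot\varphi + \frac{1}{m}\langle X,\dot\sigma\rangle^2 + \cdots$, and $C^1$ regularity of $X$ is enough to run the comparison and to make sense of $\hess b=0$, $\langle X,\nabla b\rangle = 0$ in the equality case. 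Everything downstream (the isometric splitting from $\hess b=0$, completeness of the flow, the inductive descent) is then a routine adaptation of the classical argument, using completeness of $X$ in place of the completeness of $\nabla f$ that is automatic in the gradient setting.
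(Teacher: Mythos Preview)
Your proposal is correct and follows essentially the same route as the paper: both establish a drift-Laplacian comparison $L\rho\le (n+m-1)/\rho$ along minimizing geodesics (the paper's Lemma~3), use it to show $Lb^{\pm}\ge 0$ in the barrier sense and then $Lb^{\pm}=0$, apply a Bochner identity for $L$ (the paper's Lemma~4) to force $\hess b^{\pm}=0$ and $\langle X,\nabla b^{\pm}\rangle=0$, and then split and iterate. The only cosmetic difference is that the paper obtains $|\nabla b^{\pm}|\equiv 1$ by applying the strong maximum principle to $|\nabla b^{\pm}|^2$ via the Bochner identity, rather than reading it off directly from $b^{+}+b^{-}=0$ as you suggest.
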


One can prove this by repeating the arguments of \cite{FangLiZhang} with $df$ replaced by $X$, once two underlying lemmata are suitably modified. The first lemma is a modification of an inequality on the Laplacian \cite[Lemma 2]{CheegerGromoll} to our setting, and is derived from the second variation formula for arclength of curves. The second lemma is a straightforward identity. Using these results, we follow a standard approach. In the presence of a line we construct Busemann functions and show that they are linear. Level sets of these functions then manifest the desired splitting. To set notation let
\begin{equation}
\label{eq8}
Lu=\Delta u - \nabla_Xu,
\end{equation}
where $\Delta u=\tr \hess u$ is the Laplace-Beltrami operator acting on functions. Moreover let $p\in M$ be fixed and set $\rho(q)=\dist(p,q)$.

\begin{lemma}\label{lemma3}
Let $\ric_X^m\ge \lambda g$ for some $\lambda\ge 0$ and $m> 0$. If $x\in M$ is not in the cut locus of $p$ then
\begin{equation}
\label{eq9}
L \rho(x) \le \frac{n+m-1}{\rho(x)}.
\end{equation}
\end{lemma}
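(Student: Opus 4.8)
The plan is to run the classical second-variation argument underlying the Cheeger--Gromoll Laplacian comparison, with $\ric$ replaced by $\ric_X^m$, and to absorb the extra terms produced by $X$ through one integration by parts together with an elementary pointwise optimization.

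First I would fix $x\notin\cut(p)$ and let $\gamma:[0,\ell]\to M$, $\ell=\rho(x)$, be the unique unit-speed minimizing geodesic from $p$ to $x$, so that $\rho$ is smooth near $x$. Picking an orthonormal basis $e_1,\dots,e_{n-1}$ of $\gamma'(\ell)^{\perp}$, letting $E_i$ be the parallel fields along $\gamma$ with $E_i(\ell)=e_i$, and using the competitor fields $V_i(t)=\tfrac{t}{\ell}E_i(t)$ in the index form $I(V,V)=\int_0^\ell\big(|V'|^2-\langle R(V,\gamma')\gamma',V\rangle\big)\,dt$, one has that $\hess\rho(e_i,e_i)$ equals the index of the corresponding Jacobi field, hence is $\le I(V_i,V_i)$. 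Summing over $i$ (the $\gamma'$-direction contributes $0$ to $\hess\rho$) gives
\[
\Delta\rho(x)\ \le\ \frac{n-1}{\ell}-\int_0^{\ell}\frac{t^2}{\ell^2}\,\ric\big(\gamma'(t),\gamma'(t)\big)\,dt .
\]

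Next I would feed in the weight. Set $h(t)=g\big(X,\gamma'(t)\big)$; since $\gamma$ is a geodesic, $\tfrac12(\pounds_X g)(\gamma',\gamma')=g(\nabla_{\gamma'}X,\gamma')=h'(t)$, so from the definition of $\ric_X^m$,
\[
\ric(\gamma',\gamma')=\ric_X^m(\gamma',\gamma')-h'(t)+\tfrac1m h(t)^2\ \ge\ \lambda-h'(t)+\tfrac1m h(t)^2 .
\]
Because $X$ is $C^1$ and $\gamma$ is smooth, $h\in C^1$, and $h(\ell)=\nabla_X\rho(x)$. Substituting, integrating $\int_0^\ell\tfrac{t^2}{\ell^2}h'(t)\,dt=h(\ell)-\tfrac{2}{\ell^2}\int_0^\ell t\,h(t)\,dt$ by parts, and discarding the nonpositive term $-\tfrac13\lambda\ell$, one arrives at
\[
L\rho(x)=\Delta\rho(x)-\nabla_X\rho(x)\ \le\ \frac{n-1}{\ell}-\int_0^\ell\Big(\tfrac{2t}{\ell^2}\,h(t)+\tfrac{t^2}{m\ell^2}\,h(t)^2\Big)\,dt .
\]
Finally, for each fixed $t$ the quadratic $s\mapsto \tfrac{2t}{\ell^2}s+\tfrac{t^2}{m\ell^2}s^2$ has minimum $-\tfrac{m}{\ell^2}$, attained at $s=-m/t$; thus the integrand is pointwise $\ge-\tfrac{m}{\ell^2}$, and integrating yields $L\rho(x)\le\tfrac{n-1}{\ell}+\tfrac{m}{\ell}=\tfrac{n+m-1}{\rho(x)}$.

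The bulk of this is routine; the one step that genuinely uses more than the classical hypotheses is the integration by parts of $\int_0^\ell\tfrac{t^2}{\ell^2}h'$, which needs $h\in C^1$ — precisely the role of the $C^1$ assumption on $X$, and the reason the argument survives in the non-gradient case. An alternative would be to argue via the Riccati equation for $\Delta\rho$ along $\gamma$: the same pointwise optimization converts $\ric_X^m\ge\lambda g$ into $\varphi'+\tfrac{\varphi^2}{n+m-1}+\lambda\le0$ for $\varphi=L\rho\circ\gamma$, after which one compares with $\tfrac{n+m-1}{t}$ using $\Delta\rho\sim(n-1)/t$ and the boundedness of $h$ as $t\to0^+$. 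I would prefer the second-variation route, since the weight $t/\ell$ in the competitor fields builds in the correct endpoint behavior automatically, so no separate ODE comparison is needed.
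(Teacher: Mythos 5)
Your proof is correct and follows essentially the same route as the paper: the Cheeger--Gromoll second variation estimate with the fields $\tfrac{t}{\ell}E_i$, substitution of the $\ric_X^m$ bound, integration by parts of the $\tfrac12\pounds_Xg(\dot\gamma,\dot\gamma)$ term, and cancellation of $g(X,\dot\gamma)(\ell)$ against $\nabla_X\rho(x)$. Your pointwise minimization of the quadratic in $h(t)$ is exactly the paper's completion of the square, yielding the same extra $m/\rho$.
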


\begin{proof}
Let $\gamma:[0,\rho]\to M$ be a unit speed minimal geodesic connecting $p=\gamma(0)$ to $x=\gamma(\rho)$.
Following \cite[derivation of equation (2.1)]{FangLiZhang}, we recall \cite[Lemma 2]{CheegerGromoll} that the second variation of arclength along a geodesic $\gamma$ away from the cut locus of $p=\gamma(0)$ yields
\begin{equation}
\label{eq10}
\Delta \rho(x)  \le \int_0^{\rho} \left [ \frac{(n-1)}{\rho^2} -\frac{t^2}{\rho^2}\ric({\dot \gamma},{\dot \gamma})\right ] dt\ .
\end{equation}
Using $\ric_X^m\ge \lambda g({\dot \gamma},{\dot \gamma})=\lambda$ and performing some trivial integrals, we obtain
\begin{equation}
\label{eq11}
\begin{split}
\Delta \rho(x) \le &\, \frac{(n-1)}{\rho} -\frac{\lambda\rho}{3} +\int_0^{\rho} \frac{t^2}{\rho^2} \left [ \frac12 \pounds_X g ({\dot \gamma},{\dot \gamma})-\frac{1}{m}  g(X,{\dot \gamma})^2 \right ] dt\\
=&\, \frac{(n-1)}{\rho} -\frac{\lambda\rho}{3} +g(X,{\dot \gamma}) -\frac{1}{\rho^2}\int_0^{\rho} \left [ 2t g(X,{\dot \gamma}) +\frac{t^2}{m} g(X,{\dot \gamma})^2 \right ] dt\\
=&\, \frac{(n-1)}{\rho} -\frac{\lambda\rho}{3} +g(X,{\dot \gamma}) -\frac{1}{\rho^2}\int_0^{\rho} \left [ \sqrt{m} +\frac{ tg(X,{\dot \gamma})}{\sqrt{m}}\right ]^2 dt +\frac{m}{\rho^2} \int_0^{\rho}dt\\
=&\, \frac{(n+m-1)}{\rho} -\frac{\lambda\rho}{3} +g(X,{\dot \gamma}) -\frac{1}{\rho^2}\int_0^{\rho} \left [ \sqrt{m} +\frac{ tg(X,{\dot \gamma})}{\sqrt{m}}\right ]^2 dt\\
\le &\, \frac{(n+m-1)}{\rho} +g(X,{\dot \gamma}) \ ,
\end{split}
\end{equation}
when $\lambda \ge 0$. Then clearly
\begin{equation}
\label{eq12}
\begin{split}
L\rho(x)=&\,\Delta \rho(x) -\nabla_X\rho(x)\le \frac{(n+m-1)}{\rho(x)} +g(X,{\dot \gamma})(x) -\nabla_X\rho(x)\\
\le &\, \frac{n+m-1}{\rho(x)}\ ,
\end{split}
\end{equation}
since along the minimizing curve $\gamma$ we have $g(X,{\dot \gamma})=\nabla_X\rho$.
\end{proof}

When a line $\gamma$ is present, we apply this result to the \emph{Busemann function}
\begin{equation}
\label{eq13}
b^{\gamma}(q) := \lim_{t\to\infty} \left [ t - \dist(q,\gamma(t))\right ].
\end{equation}
Some care needs to be taken, as $b^{\gamma}$ is a priori a non-smooth $1$-Lipschitz function  so $Lb^{\gamma}$ must be interpreted in the appropriate weak  sense. The standard technique in analyzing Busemann functions  is to construct barrier (or support) functions, see \cite[Section 2]{EschenburgHeintze}.   Applying Lemma \ref{lemma3} to the standard Busemann support functions and applying the same reasoning as in  \cite[Lemma 2.1]{FangLiZhang} then shows that $Lb^{\gamma}\ge 0$ in the barrier sense when $\ric_X^m\ge 0$ for some $m>0$.
Furthermore, Lemmata 2.4 and 2.5 of \cite{FangLiZhang} can then be invoked to yield $Lb^{\pm}=0$ for $b^{\pm}(x):=\lim_{t\to\infty} \left [ t-\dist(x,\gamma(\pm t))\right ]$, $t\ge 0$.
The next identity is the second component required to modify the proof of \cite{FangLiZhang}.

\begin{lemma}\label{lemma4}
For any function $u\in C^{3}(M)$ it holds that
\begin{equation}
\label{eq14}
L \left ( |\nabla u|^2 \right ) =2\vert \hess u \vert^2 +2\nabla_{\nabla u} \left ( Lu \right ) +2\ric_X^m (\nabla u, \nabla u) +\frac{2}{m} \left [ X(u)\right ]^2\ .
\end{equation}
\end{lemma}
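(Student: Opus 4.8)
The plan is to obtain \eqref{eq14} as a Bochner-type identity for the drift Laplacian $L$, by correcting the classical Bochner formula for the $-\nabla_X$ term in $L$ and then absorbing the resulting curvature and connection terms into $\ric_X^m$.

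First I would start from the Bochner formula, valid for any $u\in C^3(M)$,
\begin{equation*}
\tfrac12\Delta\left(|\nabla u|^2\right)=|\hess u|^2+\nabla_{\nabla u}(\Delta u)+\ric(\nabla u,\nabla u).
\end{equation*}
Since $L=\Delta-\nabla_X$ by \eqref{eq8} and $X\left(|\nabla u|^2\right)=2\hess u(X,\nabla u)$ by metric compatibility together with the definition of the Hessian, this gives
\begin{equation*}
L\left(|\nabla u|^2\right)=2|\hess u|^2+2\nabla_{\nabla u}(\Delta u)+2\ric(\nabla u,\nabla u)-2\hess u(X,\nabla u).
\end{equation*}
Next I would replace $\Delta u$ by $Lu$: writing $\Delta u=Lu+X(u)$ and differentiating along $\nabla u$, the Leibniz rule applied to $X(u)=\langle X,\nabla u\rangle$ yields $\nabla_{\nabla u}(\Delta u)=\nabla_{\nabla u}(Lu)+\langle\nabla_{\nabla u}X,\nabla u\rangle+\hess u(X,\nabla u)$. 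Substituting, the two $\hess u(X,\nabla u)$ contributions cancel and one is left with
\begin{equation*}
L\left(|\nabla u|^2\right)=2|\hess u|^2+2\nabla_{\nabla u}(Lu)+2\langle\nabla_{\nabla u}X,\nabla u\rangle+2\ric(\nabla u,\nabla u).
\end{equation*}

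Finally I would repackage the last two terms. Since $(\pounds_X g)(\nabla u,\nabla u)=2\langle\nabla_{\nabla u}X,\nabla u\rangle$, definition \eqref{eq7} gives $\ric_X^m(\nabla u,\nabla u)=\ric(\nabla u,\nabla u)+\langle\nabla_{\nabla u}X,\nabla u\rangle-\tfrac1m[X(u)]^2$, hence $2\langle\nabla_{\nabla u}X,\nabla u\rangle+2\ric(\nabla u,\nabla u)=2\ric_X^m(\nabla u,\nabla u)+\tfrac2m[X(u)]^2$, and substituting this into the previous display produces \eqref{eq14}. The computation is entirely mechanical, so I anticipate no genuine obstacle; the only things requiring care are the sign and factor bookkeeping (the two Hessian cross-terms must cancel exactly, and the $\tfrac1m$ term must emerge with the correct sign from the last term of \eqref{eq7}), together with noting that $C^3$ regularity of $u$ is precisely what is needed to make sense of every derivative above, while only the $C^1$ regularity of $X$ assumed in Theorem \ref{thm2} is used.
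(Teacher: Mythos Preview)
Your proof is correct and follows essentially the same route as the paper: both are the standard Bochner computation for the drift Laplacian, rewriting the curvature and Lie-derivative contributions in terms of $\ric_X^m$. The only cosmetic difference is that the paper keeps $\Delta(|\nabla u|^2)$ and $\nabla_X(|\nabla u|^2)$ bundled as $2g(\nabla u,\Delta\nabla u-\nabla_X\nabla u)$ and commutes derivatives inside that expression, whereas you invoke the assembled Bochner formula first and then correct for the $-\nabla_X$ term, producing the same cancellation of the $\hess u(X,\nabla u)$ cross-terms.
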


\begin{proof}
By straightforward manipulations we have
\begin{equation}
\label{eq15}
\begin{split}
L \left ( |\nabla u|^2 \right ) = &\, \Delta \left ( |\nabla u|^2 \right ) -\nabla_X \left ( |\nabla u|^2 \right )\\
=&\, 2\vert \hess u \vert^2 +2g\left(\nabla u ,
\Delta \nabla u -\nabla_X \nabla u \right)\\
=&\, 2\vert \hess u \vert^2 +2g\left(\nabla u , \Delta \nabla u +\ric (\nabla u, \cdot) - \nabla \nabla_X u +\nabla_{\nabla u}X \right ) \\
=&\, 2\vert \hess u \vert^2 +2\nabla_{\nabla u} \left ( Lu \right ) +2\ric(\nabla u, \nabla u) +\pounds_X g (\nabla u, \nabla u)\\
=&\, 2\vert \hess u \vert^2 +2\nabla_{\nabla u} \left ( Lu \right ) +2\ric_X^m (\nabla u, \nabla u) +\frac{2}{m} \left [ X(u)\right ]^2\ .
\end{split}
\end{equation}
\end{proof}

We now have all the tools necessary to establish the splitting theorem.

\begin{proof}[Proof of Theorem \ref{thm2}] If $(M,g)$ contains no line, we are done, so assume otherwise. Then we may use the line to construct the associated Busemann functions as above. We may apply equation \eqref{eq14} with $u=b^{\pm}$, together with the earlier result that $Lb^{\pm}=0$ and the condition $\ric_X^m\ge 0$, to obtain $L\left ( |\nabla b^{\pm}|^2\right ) \ge 2|\hess b^{\pm}|^2\ge 0$. But then the strong maximum principle forces $|\nabla b^{\pm}|^2=const$, so we may normalize $\nabla b^{\pm}$ to have unit length and moreover now we have
\begin{equation}
\label{eq16}
0=L\left ( |\nabla b^{\pm}|^2\right ) \ge 2|\hess b^{\pm}|^2\ge 0.
\end{equation}
Thus, $\nabla b^{\pm}$ are parallel and the functions $b^{\pm}$ are linear, as desired. Level sets of $b^{\pm}$ are totally geodesic and, by the completeness of $(M,g)$ are complete in the induced metric.

Let $N$ be the zero level surface of $b^+(x)$ for some $x$ (we could work equally with $b^-$; in fact the level sets coincide). It now follows that
\begin{equation}
\label{eq17}
F:N\times {\mathbb R}:(p,t)\mapsto e^{t\nabla b^+}(p)=:\phi_t(p)
\end{equation}
is an isometry; see \cite{FangLiZhang} or \cite{EschenburgHeintze}. We may now identify $(N\times {\mathbb R}, g_N\otimes dt^2)$ with $(M,g)$ where $g_N$ is the induced metric, and we have that the Ricci curvature splits as $\ric(g)=0\cdot dt^2\oplus \ric(g_{N})$.

Applying \eqref{eq14} to $u=b^+$ and using that $\nabla b^+$ is parallel as well as $Lb^+=0$ produces
\begin{equation}
\label{eq18}
0=L\left (|\nabla b^+|^2\right )=\ric_X^m(\nabla b^+,\nabla b^+) + \frac{1}{m} \left [ X(b^+)\right ]^2 \ .
\end{equation}
Since $\ric_X^m\ge 0$ and $m>0$, we have $\ric_X^m(\nabla b^+,\nabla b^+)=0=X(b^+)$. Then $X$ is tangent to $N$ and
\begin{equation}
\label{eq19}
\frac12 \pounds_X(\nabla b^+,\nabla b^+)=\nabla_{\nabla b^+} \left ( g (X,\nabla b^+)\right ) =0.
\end{equation}
Furthermore, for any $Y\perp \nabla b^{\pm}$ we have $\pounds_X(Y,\nabla b^{\pm})=g(Y,\nabla_{\nabla b^{\pm}}X)=0$ by an easy calculation using the splitting. So the condition $\ric_X^m(g)\ge 0$ descends to $\ric_X^m(g_N)\ge 0$.

If $(N, g_N)$ does not contain a line, then we have obtained the desired splitting.  If $(N,g_N)$ does contain a line then we can apply the splitting to $N$ and split off a Euclidean factor of $N$.  Applying the splitting iteratively,  after finitely many steps we obtain the isometric splitting of $M$ as $\mathbb{R}^k \times N$ where $N$ does not contain any lines.
\end{proof}

Now that Theorem \ref{thm2} has been proven we make some remarks about  Bakry-\'Emery Ricci curvature and $m$-Quasi Einstein metrics.  First note that without further assumptions Theorem \ref{thm2} is not true, even in the gradient case, when the parameter $m$ is negative or infinite,  where we interpret $\mathrm{Ric}_X^{\infty} = \mathrm{Ric} + \frac 12 \pounds_X g$.  A splitting theorem can be proven in cases when $m=\infty$ or $m<0$ if an additional energy condition is placed on the vector field $X$, see \cite[Theorem 6.3]{Wylie}.

We also point out that while the splitting theorem holds for non-gradient vector fields, there are a number of results for $m$-quasi Einstein metrics that require $X=\nabla f$. One that is relevant for the considerations in this paper is a result of \cite{KimKim} and of \cite{ChruscielReallTod} which implies that if a compact manifold admits $\mathrm{Ric}_X^m=0$, for some $m>0$ and $X = \nabla f$, then $X \equiv 0$ and $(M,g)$ is Ricci flat.   Vacuum near-horizon geometries with zero cosmological constant on spherical spaces of dimensions 2 and 3 (arising from the extreme Kerr and Myers-Perry solutions) show that the assumption that $X$ is gradient is necessary in this result, since $S^2$ and $S^3$ do not admit Ricci flat metrics. This may be interpreted as illustrating how the non-gradient case is more flexible, as expected. The fact that the splitting theorem still holds in this setting is then somewhat surprising. In addition, to our knowledge it is not known if there is a compact manifold which admits $\mathrm{Ric}_X^m \geq 0$ but does not support a metric of nonnegative Ricci curvature.  This indicates that it could be true that vacuum near-horizon geometries always admit metrics of non-negative Ricci curvature.


\subsection{Proofs of main results}

\begin{proof}[Proof of Theorem \ref{thm1} $(\mathrm{i})$]When the cosmological constant $\Lambda\geq 0$, Theorem \ref{thm2} may be applied to the universal cover $\widetilde{\mathcal{H}}$ of the horizon cross-section to show that $\widetilde{\mathcal{H}}=\mathbb{R}^k \times\widetilde{N}$, where $\widetilde{N}$ is compact \cite[Theorem 69]{Petersen}. As a consequence, the subgroup $\mathcal{G}\leq\pi_{1}(\mathcal{H})$ of isometries of $\widetilde{N}$ is finite. Hence the kernel of the homomorphism $\pi_{1}(\mathcal{H})\rightarrow\mathcal{G}$ is a subgroup of finite index, and acts discretely as well as cocompactly on $\mathbb{R}^k$ so that it is a crystallographic group. By a well-known theorem of Bieberbach \cite{Wolf}, any such group must contain an Abelian subgroup isomorphic to $\mathbb{Z}^k$ of finite index.

A priori $k\leq \dim\mathcal{H}=D-2$, however this may be refined further. Suppose that $k=D-2$, then $\widetilde{\mathcal{H}}=\mathbb{R}^{D-2}$ and $\mathcal{H}=\widetilde{\mathcal{H}}/\pi_{1}(\mathcal{H})$ is flat. Standard arguments \cite{Petersen} then show that the inclusion $\mathbb{Z}^{D-2}\hookrightarrow\pi_{1}(\mathcal{H})$ is an isomorphism, and thus $\mathcal{H}$ is a torus. However tori do not admit metrics of positive scalar curvature \cite{GromovLawson}, and therefore cannot be of positive Yamabe type, yielding a contradiction. Now suppose that $k=D-3$. In this case $\widetilde{N}$ is a 1-dimensional compact manifold, and hence must be $S^1$. This, however, contradicts the fact that $\widetilde{\mathcal{H}}$ is simply connected. It follows that $k\leq D-4$.
\end{proof}

\begin{proof}[Proof of Theorem \ref{thm1} $(\mathrm{ii})$]
By part $(\mathrm{i})$ there is a subgroup $\mathbb{Z}^k\leq \pi_{1}(\mathcal{H})$ of finite index, with $k\leq D-4$. Since the first homology group $H_{1}(\mathcal{H},\mathbb{Z})$ is isomorphic to the abelianization of the fundamental group, this subgroup is also of finite index $H_{1}(\mathcal{H},\mathbb{Z})$. Therefore the rank of the torsion free part (first Betti number) satisfies $b_1(\mathcal{H})=k\leq D-4$.
\end{proof}

\begin{proof}[Proof of Theorem \ref{thm1} $(\mathrm{iii})$]
If $\Lambda>0$, then the $2$-Bakry-\'Emery-Ricci curvature is strictly positive $\mathrm{Ric}_{h}^{2}> 0$. It is then impossible for the the universal cover to split
as $\widetilde{\mathcal{H}}=\mathbb{R}^k\times\widetilde{N}$ with $k\geq 1$. Thus the universal cover is compact, which again implies that $\pi_{1}(\mathcal{H})$ is finite.
\end{proof}

\subsection{A remark on curvature-dimension conditions}
It is not central to our applications, but for completeness we now show that our `energy condition' form of the curvature-dimension condition is equivalent to a more usual form of the curvature-dimension condition.

\begin{lemma}\label{lemma6} Let $m>0$ and assume that $\ric_X^m\ge \lambda g$ for some $\lambda\in {\mathbb R}$, then
\begin{equation}
\label{eq20}
L\left (|\nabla u|^2\right )\ge 2\nabla_{\nabla u} \left ( Lu \right )+\frac{2}{(n+m)} \left ( Lu \right )^2 +\lambda |\nabla u|^2\ .
\end{equation}
Conversely, if \eqref{eq20} holds for all $u\in C^3(M)$, then $\ric_X^m\ge \lambda g$.
\end{lemma}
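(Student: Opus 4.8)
The plan is to deduce the forward implication directly from Lemma~\ref{lemma4} and a Cauchy--Schwarz estimate, and to obtain the converse by specializing to a carefully chosen test function at a point. For the forward direction, I would start from the identity \eqref{eq14}, namely
\[
L\left(|\nabla u|^2\right)=2|\hess u|^2+2\nabla_{\nabla u}(Lu)+2\ric_X^m(\nabla u,\nabla u)+\frac{2}{m}[X(u)]^2.
\]
Using the hypothesis $\ric_X^m\ge\lambda g$ we bound the third term below by $\lambda|\nabla u|^2$, so everything reduces to showing $2|\hess u|^2+\frac{2}{m}[X(u)]^2\ge\frac{2}{n+m}(Lu)^2$. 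Here $Lu=\Delta u-X(u)=\tr\hess u-X(u)$, so I want the elementary inequality $|\hess u|^2+\frac{1}{m}[X(u)]^2\ge\frac{1}{n+m}(\tr\hess u-X(u))^2$. This follows by Cauchy--Schwarz applied to the $(n+1)$-tuple consisting of the eigenvalues $\mu_1,\dots,\mu_n$ of $\hess u$ together with the single entry $X(u)$: one has $(\sum_i\mu_i - X(u))^2\le(n+1)(\sum_i\mu_i^2+[X(u)]^2)$ in the naive form, but to get the sharp constant $n+m$ one instead weights the last slot, writing $\tr\hess u - X(u)=\sum_i 1\cdot\mu_i + (-\sqrt{m})\cdot\frac{X(u)}{\sqrt{m}}$ and applying Cauchy--Schwarz with the vector of coefficients $(1,\dots,1,-\sqrt m)$, whose squared norm is $n+m$, against the vector $(\mu_1,\dots,\mu_n,X(u)/\sqrt m)$, whose squared norm is $|\hess u|^2+\frac{1}{m}[X(u)]^2$. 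This gives exactly the claimed bound, completing the forward implication.

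For the converse, suppose \eqref{eq20} holds for every $u\in C^3(M)$. Fix a point $p\in M$ and a unit tangent vector $\xi\in T_pM$; I must show $\ric_X^m(\xi,\xi)\ge\lambda$. The idea is to choose $u$ so that at $p$ we have $\nabla u=\xi$ and we have maximal freedom over $\hess u(p)$ and over the cross term. Subtracting the forward-direction identity \eqref{eq14} (which holds unconditionally) from the assumed inequality \eqref{eq20} at the point $p$ yields, after the dust settles,
\[
2\ric_X^m(\nabla u,\nabla u)+2|\hess u|^2+\frac{2}{m}[X(u)]^2-\frac{2}{n+m}(Lu)^2\ge\lambda|\nabla u|^2
\]
at $p$, i.e. $\ric_X^m(\xi,\xi)\ge\lambda-|\hess u(p)|^2-\frac1m[X(u)(p)]^2+\frac{1}{n+m}(Lu(p))^2$. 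So it suffices to exhibit, for each $\delta>0$, a function $u$ with $\nabla u(p)=\xi$ making $|\hess u(p)|^2+\frac1m[X(u)(p)]^2-\frac{1}{n+m}(Lu(p))^2<\delta$. But the equality case of the Cauchy--Schwarz inequality above is attained precisely when $(\mu_1,\dots,\mu_n,X(u)(p)/\sqrt m)$ is proportional to $(1,\dots,1,-\sqrt m)$, i.e. when $\hess u(p)=c\,g_p$ and $X(u)(p)=-cm$ for some constant $c$. Since $X(u)(p)=g(X(p),\xi)$ is already determined by the constraint $\nabla u(p)=\xi$, I simply take $c=-\tfrac1m g(X(p),\xi)$ and choose $u$ to be a function whose $2$-jet at $p$ is prescribed: $u(q)=g(\xi,\exp_p^{-1}q)+\tfrac{c}{2}|\exp_p^{-1}q|^2$ near $p$ (extended arbitrarily and smoothly), so that $\nabla u(p)=\xi$ and $\hess u(p)=c\,g_p$ exactly. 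For this choice the bracketed quantity vanishes identically, giving $\ric_X^m(\xi,\xi)\ge\lambda$ with no error term at all.

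The main obstacle I anticipate is purely bookkeeping rather than conceptual: carrying the cross term $X(u)$ through correctly and not losing a factor, and in the converse making sure that the chosen $u$ is globally $C^3$ (which is harmless, since only its $2$-jet at $p$ matters and one can cut off) and that the unconditional identity \eqref{eq14} is legitimately subtracted pointwise. One small subtlety worth flagging is the equality analysis in Cauchy--Schwarz: because the weight on the $X(u)$-slot carries a sign ($-\sqrt m$), the extremal Hessian is $c\,g$ with $c$ and $X(u)(p)$ of opposite sign, and one should double-check the arithmetic $\tr(cg)-X(u)(p)=nc-(-cm)=(n+m)c$ against $|cg|^2+\frac1m(cm)^2=(n+m)c^2$, so that indeed $\frac{1}{n+m}((n+m)c)^2=(n+m)c^2$ matches. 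With that verified, both directions are essentially immediate consequences of Lemma~\ref{lemma4}.
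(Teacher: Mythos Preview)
Your proposal is correct and follows essentially the same route as the paper: the forward direction combines the Bochner-type identity \eqref{eq14} with the elementary inequality $|\hess u|^2+\tfrac{1}{m}[X(u)]^2\ge\tfrac{1}{n+m}(Lu)^2$, which you obtain by weighted Cauchy--Schwarz and the paper obtains by the equivalent completion-of-squares identity \eqref{eq21}--\eqref{eq23}; for the converse, your explicit test function with prescribed $2$-jet at $p$ realizing the equality case is precisely the argument the paper cites from Villani. One bookkeeping point worth tidying (your own flagged concern): when you pass from $2\ric_X^m(\xi,\xi)+\cdots\ge\lambda$ to $\ric_X^m(\xi,\xi)\ge\lambda-\cdots$ you have silently absorbed a factor of $2$, which traces back to the coefficient of $\lambda|\nabla u|^2$ in the stated inequality; with the standard normalization $2\lambda|\nabla u|^2$ your arithmetic is exact.
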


\begin{proof}
A direct computation yields
\begin{equation}
\label{eq21}
\begin{split}
\frac{(Lu)^2}{(n+m)}=&\, \frac{(\Delta u)^2}{(n+m)} +\frac{[X(u)]^2}{(n+m)} -\frac{2}{(n+m)}X(u)\Delta u\\
=&\, \frac{(\Delta u)^2}{n} +\frac{[X(u)]^2}{m} -\frac{1}{(n+m)}\left [ \sqrt{\frac{m}{n}} \Delta u + \sqrt{\frac{n}{m}} X(u) \right ]^2\ ,
\end{split}
\end{equation}
so that
\begin{equation}
\label{eq22}
\frac{1}{m}\left [ X(u)\right ]^2= \frac{(Lu)^2}{(n+m)} -\frac{(\Delta u)^2}{n} + \frac{1}{(n+m)}\left [ \sqrt{\frac{m}{n}} \Delta u + \sqrt{\frac{n}{m}} X(u) \right ]^2\ .
\end{equation}
Substituting this into \eqref{eq14} produces
\begin{equation}
\label{eq23}
\begin{split}
L \left ( |\nabla u|^2 \right ) = &\, 2\left \vert \hess u -\frac{1}{n}\left ( \Delta u\right ) g \right \vert^2 +2\nabla_{\nabla u} \left ( Lu \right ) +2\ric_X^m (\nabla u, \nabla u)\\
&\, +\frac{2}{(n+m)}(Lu)^2 + \frac{2}{(n+m)}\left [ \sqrt{\frac{m}{n}} \Delta u + \sqrt{\frac{n}{m}} X(u) \right ]^2\ .
\end{split}
\end{equation}
Equation \eqref{eq20} now follows from the assumption on $\ric_X^m$.

To prove the converse, simply follow the argument of \cite[pp 387--388]{Villani}, noting that no use whatsoever is made in that argument of the assumption $X=\nabla V$.
\end{proof}

\section{Applications}
\label{sec3}
\setcounter{equation}{0}
\setcounter{section}{3}

In this section we will use Theorem \ref{thm1} to refine the topological classification of horizons described in the Introduction. This will be accomplished with the aid of results from geometric group theory. Recall that in a finitely generated group, the smallest number of generators needed to express an element is referred to as the length of the element. Furthermore, such a group is said to have polynomial growth if the number of elements of length at most $\alpha$ is bounded above by a polynomial function $p(\alpha)$, and the minimum degree of any polynomial having this property is the order of growth. An important characterization for groups of polynomial growth was given by Gromov \cite{Gromov,Kleiner}. It asserts that a finitely generated group has polynomial growth if and only if it has a nilpotent subgroup of finite index. In particular, a finitely generated group of exponential growth cannot have an Abelian (more generally nilpotent) subgroup of finite index. Therefore in light of Theorem \ref{thm1} $(\mathrm{i})$, horizon cross-sectional topology cannot take the form of a manifold whose fundamental group is of exponential growth. We note that in the case of a horizon with nonnegative Ricci curvature, a result of Milnor \cite{Milnor} implies directly that the fundamental group is of polynomial growth of order no larger than $D-2$.

The above arguments state that, heuristically, horizons must have limited topology. More precisely, we can rule out some basic constructions that appear in many classifications. Consider the connected sum $M\# N$ of two manifolds $M$ and $N$. The fundamental group of the connected sum is the free product of the individual fundamental groups $\pi_{1}(M\# N)=\pi_1(M)\ast \pi_1(N)$, when the dimension of $M$, $N$ is greater than 2. In general the free product of two nontrivial groups is quite large, and thus should not be able to serve as the fundamental group of a horizon. Indeed, it can be shown \cite[Theorem 4]{Mann} that the free product of two nontrivial groups is always of exponential growth or better, whenever at least one of the two groups making up the free product has order greater than 2. This may be established by showing that such groups contain a non-Abelian free subgroup on two generators. Altogether this establishes the following result.

\begin{theorem}\label{thm3}
Let $\mathcal{H}$ be a degenerate horizon (cross-section) component of a stationary vacuum spacetime with nonnegative cosmological constant $\Lambda\geq 0$. Then $\mathcal{H}$ cannot be expressed as a connected sum $M\# N$ for any compact manifolds $M$ and $N$ both having nontrivial fundamental group, except possibly in the case that $\pi_1(M)=\pi_1(N)=\mathbb{Z}_2$.
\end{theorem}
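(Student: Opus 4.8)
The plan is to deduce Theorem~\ref{thm3} directly from Theorem~\ref{thm1}~$(\mathrm{i})$ together with the facts about growth of groups recalled above. First I would reduce to the case $\dim\mathcal{H} = D-2 > 2$; since $D \geq 4$ and the $D=4$ case has $\dim\mathcal{H}=2$ where the only possibility with nontrivial $\pi_1$ in a connected sum decomposition is already special, the interesting content is $D \geq 5$ where the identity $\pi_1(M\# N) = \pi_1(M) \ast \pi_1(N)$ applies. Here I should be a little careful: for a connected sum of surfaces ($D=4$) the van Kampen computation differs, but a positive-Yamabe-type orientable surface is $S^2$, which has trivial $\pi_1$, so no decomposition with both factors having nontrivial $\pi_1$ occurs; this handles $D=4$ vacuously.

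Next, assume for contradiction that $\mathcal{H} = M \# N$ with $\pi_1(M), \pi_1(N)$ both nontrivial and not both equal to $\mathbb{Z}_2$. Then by van Kampen $\pi_1(\mathcal{H}) = \pi_1(M) \ast \pi_1(N)$, a free product of two nontrivial groups at least one of which has order greater than $2$. By the cited result \cite[Theorem 4]{Mann}, such a free product contains a nonabelian free subgroup on two generators, hence is of exponential growth. On the other hand, Theorem~\ref{thm1}~$(\mathrm{i})$ asserts that $\pi_1(\mathcal{H})$ contains an abelian subgroup $\mathbb{Z}^k$ of finite index. A group with a finite-index abelian (indeed nilpotent) subgroup has polynomial growth by Gromov's theorem \cite{Gromov,Kleiner}; equivalently, it cannot contain a nonabelian free subgroup, since such a subgroup would survive in any finite-index subgroup by intersecting, contradicting abelianness. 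This is the desired contradiction, so no such decomposition exists.

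The one genuine subtlety — and the step I expect to require the most care — is the borderline case $\pi_1(M) = \pi_1(N) = \mathbb{Z}_2$, which must be excluded from the conclusion rather than ruled out. The free product $\mathbb{Z}_2 \ast \mathbb{Z}_2$ is the infinite dihedral group, which is virtually $\mathbb{Z}$ (it contains $\mathbb{Z}$ as an index-two subgroup) and thus of linear, not exponential, growth; consequently neither \cite[Theorem 4]{Mann} nor the growth argument applies, and indeed $\mathbb{Z}_2\ast\mathbb{Z}_2$ is perfectly consistent with the structure forced by Theorem~\ref{thm1}~$(\mathrm{i})$ (with $k=1$). So the statement must carry the stated exception, and the proof should explicitly flag that $\mathbb{RP}^{D-2}\#\mathbb{RP}^{D-2}$-type manifolds are not excluded by this method — consistent with the appearance of $\mathbb{RP}^3\#\mathbb{RP}^3$ in Corollary~\ref{cor7}. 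A final small point to verify is that the hypotheses of Theorem~\ref{thm1} transfer to $\mathcal{H}$ as given, which is immediate since $\mathcal{H}$ is by assumption a degenerate horizon cross-section of a stationary vacuum spacetime with $\Lambda \geq 0$; no new geometric input is needed beyond what Theorem~\ref{thm1} already provides.
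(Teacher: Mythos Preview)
Your proposal is correct and follows essentially the same route as the paper: apply van Kampen to get $\pi_1(\mathcal{H})=\pi_1(M)\ast\pi_1(N)$, invoke \cite[Theorem~4]{Mann} to produce a nonabelian free subgroup (hence exponential growth) whenever not both factors are $\mathbb{Z}_2$, and contradict the virtually-abelian structure of $\pi_1(\mathcal{H})$ supplied by Theorem~\ref{thm1}\,$(\mathrm{i})$ via Gromov's characterization. Your explicit handling of the $D=4$ case and the direct ``$F_2$ meets any finite-index subgroup nontrivially'' argument are minor elaborations the paper leaves implicit, but the core proof is identical.
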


In dimension $D=5$ this theorem gives a strong refinement of previous horizon topology classifications. Namely, it rules out all possible connected sums of prime 3-manifolds.

\begin{corollary} \label{cor7}
Let $\mathcal{H}$ be a degenerate horizon (cross-section) component of a 5-dimensional stationary vacuum spacetime with nonnegative cosmological constant $\Lambda\geq 0$. Then $\mathcal{H}$ is diffeomorphic to either a spherical space, $S^1\times S^2$, or $\mathbb{RP}^3\# \mathbb{RP}^3$.
\end{corollary}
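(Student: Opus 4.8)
The plan is to combine the topological classification of compact orientable positive-Yamabe-type $3$-manifolds recalled in the Introduction with Theorem \ref{thm3}. Since $D=5$, the horizon cross-section $\mathcal{H}$ is a compact orientable $3$-manifold of positive Yamabe type; by the prime decomposition theorem together with the Gromov--Lawson theorem and the resolution of the Poincar\'{e} conjecture, $\mathcal{H}$ is diffeomorphic to a connected sum $P_1\#\cdots\#P_k$ in which each prime summand $P_i$ is either $S^3$, a spherical space $S^3/\Gamma_i$, or $S^1\times S^2$. After deleting any $S^3$ summands, which do not affect the diffeomorphism type, we may assume that every $P_i$ has nontrivial fundamental group: a nontrivial spherical space has $\pi_1=\Gamma_i\neq 1$, and $\pi_1(S^1\times S^2)=\mathbb{Z}$.

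First I would dispose of the case $k\leq 1$, in which $\mathcal{H}$ is diffeomorphic to $S^3$, a spherical space, or $S^1\times S^2$, and there is nothing more to prove. So assume $k\geq 2$ and write $\mathcal{H}=M\#N$ with $M=P_1$ and $N=P_2\#\cdots\#P_k$. Both $M$ and $N$ then have nontrivial fundamental group, and all summands are $3$-dimensional, so Theorem \ref{thm3} applies and forces $\pi_1(M)\cong\pi_1(N)\cong\mathbb{Z}_2$.

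Next I would identify the manifolds. The condition $\pi_1(P_1)\cong\mathbb{Z}_2$ pins down $P_1$: among orientable spherical spaces $S^3/\Gamma$ the only one with $\pi_1$ of order $2$ is $\mathbb{RP}^3$, since an order-$2$ element of $SO(4)$ acting freely on $S^3$ can have no eigenvalue $+1$ and hence equals $-I$; and $S^1\times S^2$ is excluded because $\mathbb{Z}\not\cong\mathbb{Z}_2$. Thus $P_1\cong\mathbb{RP}^3$. On the other side, $\pi_1(N)\cong\pi_1(P_2)\ast\cdots\ast\pi_1(P_k)$ is a free product of $k-1$ nontrivial groups, and since $\mathbb{Z}_2$ is freely indecomposable this is possible only when $k-1=1$, i.e. $k=2$, with $\pi_1(P_2)\cong\mathbb{Z}_2$; as before $P_2\cong\mathbb{RP}^3$. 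Hence $\mathcal{H}\cong\mathbb{RP}^3\#\mathbb{RP}^3$, which completes the classification.

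I do not expect a serious obstacle: the argument is essentially bookkeeping on top of Theorem \ref{thm3}. The only points needing care are the identification of $\mathbb{RP}^3$ as the unique orientable spherical space-form with $\pi_1\cong\mathbb{Z}_2$ and the use of free indecomposability of $\mathbb{Z}_2$ to collapse the second connected sum. It is also worth remarking---for consistency rather than as part of the proof---that $\mathbb{RP}^3\#\mathbb{RP}^3$ does survive all the constraints of Theorem \ref{thm1}, since its fundamental group $\mathbb{Z}_2\ast\mathbb{Z}_2$, the infinite dihedral group, contains $\mathbb{Z}$ as an index-two subgroup and has $b_1=0\leq D-4=1$.
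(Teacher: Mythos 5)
Your proposal is correct and follows essentially the same route as the paper: invoke the positive-Yamabe classification of compact orientable $3$-manifolds and then apply Theorem \ref{thm3} to eliminate all connected sums except the one whose summands both have fundamental group $\mathbb{Z}_2$. The only difference is that you spell out the bookkeeping the paper leaves implicit (uniqueness of $\mathbb{RP}^3$ among orientable spherical spaces with $\pi_1\cong\mathbb{Z}_2$, and free indecomposability of $\mathbb{Z}_2$ forcing exactly two summands), and your closing remark that $\mathbb{Z}_2\ast\mathbb{Z}_2$ survives Theorem \ref{thm1}~$(\mathrm{i})$ matches the paper's own observation.
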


\begin{proof}
According to Theorem \ref{thm3} and the previous classification, the only possible topology which is not immediately ruled out is $\mathcal{H}=\mathbb{RP}^3\# \mathbb{RP}^3$. In this case $\pi_1(\mathcal{H})=\mathbb{Z}_2 \ast \mathbb{Z}_2$ is the infinite dihedral group, which has an index 2 infinite cyclic subgroup, and thus the statement of Theorem \ref{thm1} $(\mathrm{i})$ cannot be used to exclude this possibility.
\end{proof}

\begin{remark}
In the asymptotically flat or asymptotically Kaluza-Klein case, if there is a $U(1)$ symmetry then $\mathbb{RP}^3\# \mathbb{RP}^3$ may be removed from the statement of Corollary \ref{cor7}.\footnote{
This observation is due independently to Stefan Hollands and James Lucietti.} This follows from \cite[Result 1]{HollandsHollandIshibashi}, which implies that in such a situation the connected sum of two projective spaces can only appear through a connected sum with at least one $S^1\times S^2$. The presence of a $U(1)$ symmetry is generic in the sense that it is assured by the rigidity theorem when the solution is analytic \cite{HollandsIshibashi0,HollandsIshibashiWald,MoncriefIsenberg}. Moreover, in the spherical space case this symmetry further restricts the possible topologies
\cite[Result 1]{HollandsHollandIshibashi}.
\end{remark}

The results above suggest that if a degenerate horizon cross-section is decomposed as a nontrivial connected sum, then at least one member of the sum should be simply connected. Manifolds which cannot be written as a nontrivial connected sum are referred to as prime. Therefore, to a certain extent perhaps horizons of stationary black holes (of any dimension) may be described as almost prime.

\begin{conjecture}\label{conjecture9}
Each connected component of a degenerate
horizon cross-section in an asymptotically flat stationary vacuum spacetime is
`almost prime',\footnote{The related concept of stably prime manifold appears in \cite{KreckLuckTeichner}.} in the sense that if it is expressed as a connected sum of two manifolds then at least one member of the sum must be simply connected.
\end{conjecture}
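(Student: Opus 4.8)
The plan is to reduce the conjecture, by way of Theorem~\ref{thm3}, to a single residual case, and then to attack that case using the global causal structure. If a degenerate horizon cross-section $\mathcal{H}$ of an asymptotically flat stationary vacuum spacetime is written as a connected sum $M\#N$ with $\pi_{1}(M)$ and $\pi_{1}(N)$ both nontrivial, then (for $D=4$ there is nothing to prove, since $\mathcal{H}\cong S^{2}$ by Hawking's theorem, while for $D\ge 5$ one has $\pi_{1}(M\#N)=\pi_{1}(M)\ast\pi_{1}(N)$) Theorem~\ref{thm3} already rules out every possibility \emph{except} $\pi_{1}(M)\cong\pi_{1}(N)\cong\mathbb{Z}_{2}$, in which case $\pi_{1}(\mathcal{H})\cong\mathbb{Z}_{2}\ast\mathbb{Z}_{2}$ is the infinite dihedral group $D_{\infty}$. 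Thus Conjecture~\ref{conjecture9} is \emph{equivalent} to the assertion that a degenerate horizon cross-section in an asymptotically flat stationary vacuum spacetime cannot have fundamental group $D_{\infty}$.

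Next I would record the rigid structure forced on such an $\mathcal{H}$. Asymptotic flatness gives $\Lambda=0$, so the near-horizon data satisfy \eqref{1} with $\mathrm{Ric}_{h}^{2}=0$, and Theorem~\ref{thm2} applied to the universal cover yields $\widetilde{\mathcal{H}}=\mathbb{R}^{k}\times\widetilde{N}$ with $\widetilde{N}$ compact and simply connected and $h$ tangent to $\widetilde{N}$; since $D_{\infty}$ is virtually $\mathbb{Z}$, Theorem~\ref{thm1}(i) forces $k=1$. A short argument using \eqref{eq18}, \eqref{eq19} and the splitting shows moreover that $h$ is invariant under translations of the $\mathbb{R}$-factor, so the index-two cyclic subgroup of $D_{\infty}$ produces a double cover $\widehat{\mathcal{H}}$ which is a mapping torus of $\widetilde{N}$ carrying a vacuum near-horizon structure with $h$ restricting to a solution of $\mathrm{Ric}_{h}^{2}(g_{\widetilde{N}})=0$ on each fiber, and the residual $\mathbb{Z}_{2}$ acts on $\widehat{\mathcal{H}}$ as a free isometric involution that reverses the circle direction and preserves $h$. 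In $D=5$ this is exactly $\widetilde{N}=S^{2}$, $\widehat{\mathcal{H}}=S^{1}\times S^{2}$, $\mathcal{H}=\mathbb{RP}^{3}\#\mathbb{RP}^{3}$.

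It is essential to realize that asymptotic flatness cannot be dropped: the near-horizon equations \eqref{1} by themselves \emph{do not} exclude $D_{\infty}$. Indeed, the product of a circle with the extreme Kerr near-horizon geometry on $S^{2}$ is a genuine $\Lambda=0$ vacuum near-horizon geometry on $S^{1}\times S^{2}$ (the flat direction decouples from \eqref{1}), and the free involution $(s,\theta,\phi)\mapsto(-s,\pi-\theta,\phi+\pi)$ preserves both $\gamma$ and $h$ by the equatorial reflection symmetry of Kerr, so its quotient is a vacuum near-horizon geometry with cross-section $\mathbb{RP}^{3}\#\mathbb{RP}^{3}$. Hence the obstruction cannot be local, nor can it come from $\mathcal{H}$ being of positive Yamabe type (which $\mathbb{RP}^{3}\#\mathbb{RP}^{3}$ is), nor even from a curvature condition, since $\mathbb{RP}^{3}\#\mathbb{RP}^{3}$ carries metrics of nonnegative Ricci curvature (quotients of the product metric on $\mathbb{R}\times S^{2}$). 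The only remaining global input is topological censorship: the domain of outer communications is simply connected. The plan is therefore to run, without the symmetry hypothesis, an analogue of the orbit-space analysis of \cite[Result~1]{HollandsHollandIshibashi} --- which disposes of $\mathbb{RP}^{3}\#\mathbb{RP}^{3}$ precisely \emph{when} a $U(1)$ symmetry is present --- showing that the two order-two generators of $D_{\infty}$, realized by loops in $\mathcal{H}$, are incompatible with a spacelike slice bounded by $\mathcal{H}$ and terminating in an asymptotically $\mathbb{R}^{D-2}$ end, perhaps by also exploiting the special (cylindrical, $\mathrm{AdS}_{2}$-type) asymptotics that extremality imposes near $\mathcal{H}$ on such a slice.

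The hard part --- and the reason this remains a conjecture --- is that, away from symmetric situations, the only global tools currently on hand are topological censorship and positive Yamabe type, while the very premise of this paper is that positive Yamabe type is too weak in dimension $\ge 5$; and topological censorship alone is consistent with many nontrivial horizon topologies, for instance the $S^{1}\times S^{2}$ of the black ring, whose fundamental-group generator simply bounds a disk in the spacelike slice. What one needs is a mechanism that kills the \emph{torsion} generators of $D_{\infty}$, and it is unclear how to produce one without either the extra symmetry of \cite{HollandsHollandIshibashi,HollandsYazadjiev1} or a substantively new use of extremality. A natural first target is to settle $D=5$ in full generality --- equivalently, to rule out $\mathbb{RP}^{3}\#\mathbb{RP}^{3}$ as the horizon cross-section of an asymptotically flat extreme vacuum black hole without assuming axisymmetry --- which would by itself complete Corollary~\ref{cor7}.
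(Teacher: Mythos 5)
The statement you are addressing is Conjecture \ref{conjecture9}, which the paper itself does not prove, and your text --- by your own admission in the final paragraph --- does not prove it either; judged as a proof it has a genuine gap, namely the entire decisive step. What you do establish is correct but is already contained in the paper: by Theorem \ref{thm3} the only surviving case of a connected sum with both summands non-simply-connected is $\pi_1(M)=\pi_1(N)=\mathbb{Z}_2$, so the conjecture is equivalent to excluding horizon cross-sections with fundamental group $\mathbb{Z}_2\ast\mathbb{Z}_2$ (infinite dihedral), which in $D=5$ is exactly the $\mathbb{RP}^3\#\mathbb{RP}^3$ possibility left open in Corollary \ref{cor7}. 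Your further observation that the near-horizon equations alone cannot close this case is also precisely the paper's point: the example you build (extreme Kerr near-horizon data extended trivially along a flat factor, quotiented by a free $\mathbb{Z}_2\ast\mathbb{Z}_2$ action generated by reflection-plus-antipodal involutions) is essentially identical to the paper's construction of a $\Lambda=0$ vacuum near-horizon geometry on $\mathbb{RP}^3\#\mathbb{RP}^3$ as a quotient of $\mathbb{R}\times S^2$ with metric $dx^2+\gamma_{kerr}$, which the authors note descends from an asymptotically Kaluza-Klein (not asymptotically flat) extreme Kerr string. Your intermediate structural claims (that Theorem \ref{thm1}(i) forces $k=1$, that the index-two subgroup yields a circle-fibered double cover with $h$ tangent to the fibers) are plausible consequences of Theorem \ref{thm2}, but they are not where the difficulty lies.

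The missing idea is the one you name but do not supply: a mechanism, valid without the $U(1)$ symmetry, by which asymptotic flatness of the parent spacetime kills the two torsion generators of $\mathbb{Z}_2\ast\mathbb{Z}_2$. The appeal to an ``orbit-space analysis in the spirit of \cite[Result 1]{HollandsHollandIshibashi} but without the symmetry'' is only a program; that result uses the $U(1)$ action in an essential way (as the paper's remark following Corollary \ref{cor7} makes explicit), and topological censorship by itself does not obstruct noncontractible loops on the horizon, as the black ring already shows. So your proposal should be read as a correct reduction of the conjecture to its known hard core, together with a reconstruction of the paper's own cautionary example, rather than as a proof; the conjecture remains open exactly at the point where your argument stops.
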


This may be delicate. For instance, we claim that there exists a solution of the vacuum near-horizon geometry equations with $\Lambda=0$ on $\mathcal{H}=\mathbb{RP}^3 \# \mathbb{RP}^3$, although it is not clear whether this solution arises from the near-horizon limit of an asymptotically flat stationary vacuum spacetime. To verify this claim, observe that the universal cover of the connected sum of two projective spaces is an infinite connected sum of $S^3$'s, which is diffeomorphic to $\mathbb{R}\times S^2$. The group $\mathbb{Z}_2 \ast\mathbb{Z}_2$ acts naturally on this space as follows. The generator for the first $\mathbb{Z}_2$ consists of a reflection in the $\mathbb{R}$ component across 1 together with an application of the antipodal map in the $S^2$ component. The generator for the second $\mathbb{Z}_2$ in the free product is defined similarly, but with a reflection across $-1$. We then have $\mathcal{H}=\mathbb{R}\times S^2/\mathbb{Z}_2\ast\mathbb{Z}_2$. Consider now the product metric $\gamma=dx^2 +\gamma_{kerr}$ on $\mathbb{R}\times S^2$ in which $\gamma_{kerr}$ is the horizon metric from the near-horizon geometry of the extreme Kerr black hole. Let $h$ denote the natural extension to $\mathbb{R}\times S^2$ of the 1-form near-horizon data associated with extreme Kerr. By defining $F$ according to \eqref{1}, a solution $(F,h,\gamma)$ of the near-horizon geometry equations is produced on the universal cover. Since this set of near-horizon data is invariant under the $\mathbb{Z}_2\ast\mathbb{Z}_2$ action described above, by passing to the quotient a solution is obtained on $\mathbb{RP}^3 \# \mathbb{RP}^3$. It should be pointed out that this construction could be achieved by performing a similar quotient of an extreme Kerr string black hole, and taking the near-horizon limit. Thus, while it is not known whether this solution arises from an asymptotically flat parent black hole, it does arise from a quotient of an asymptotically Kaluza-Klein black hole.

What of \emph{nondegenerate} horizons? Should Conjecture \ref{conjecture9} also extend to them? In fact, it may be more interesting if the question were answered in the negative, so that degenerate horizons had a topological rigidity not seen in nondegenerate ones. One can then contemplate a quasi-stationary system having a horizon consisting of a nontrivial connected sum, being `spun up' adiabatically so that the mass is fixed and the system is always stationary to a good approximation. On approach to extremality, i.e. as the horizon nears a degenerate state, the horizon would have to undergo a dynamical instability, perhaps forming neckpinches to break apart the connected sum. The adiabatic approximation would likely fail at some point but, much worse, so could cosmic censorship. This picture is consistent with the 3rd law of black hole thermodynamics, which asserts that it is not possible to produce a black hole with vanishing surface gravity (temperature) through a physical process.
Any neckpinch region might well resemble a black string undergoing the Gregory-Laflamme instability \cite{GregoryLaflamme}. Of course, one possible resolution is that this scenario will not occur because such instabilities prevent (stationary vacuum) connected sum horizons from forming in the first place. This view might be seen as supporting the extension of Conjecture \ref{conjecture9}, except perhaps for unstable horizons. It's a question worth pursuing.

\medskip

\textbf{Acknowledgements.}
The authors would like to thank Hari Kunduri for comments, as well as Stefan Hollands and James Lucietti for independently pointing out that $\mathbb{RP}^3 \#\mathbb{RP}^3$ can be ruled out in Corollary \ref{cor7} when an extra $U(1)$ symmetry is present. The first author would also like to thank Luca Di Cerbo for useful discussions.

\end{document}